\newtheorem{prop}{Proposition}
\newcommand{\bl}[1]{\textcolor[rgb]{0,0,0}{#1}}
\newcommand{\rd}[1]{\textcolor[rgb]{0,0,0}{#1}}
\newcommand{\pp}[1]{\textcolor[rgb]{0,0,0}{#1}}
\newcommand{\ff}[1]{\textcolor[rgb]{0,0,0}{#1}}
\newcommand{\blu}[1]{\textcolor[RGB]{0,0,0}{#1}}
\def\BibTeX{{\rm B\kern-.05em{\sc i\kern-.025em b}\kern-.08em
    T\kern-.1667em\lower.7ex\hbox{E}\kern-.125emX}}
\begin{document}

\title{Sparse Spatial Smoothing:
Reduced Complexity and Improved Beamforming Gain via Sparse Sub-Arrays\\
}

\author{\IEEEauthorblockN{Yinyan Bu$^\ast$, Robin Rajam\"{a}ki$^\ast$, Anand Dabak$^\ddagger$, Rajan Narasimha$^\ddagger$, Anil Mani$^\ddagger$, Piya Pal$^\ast$}
\IEEEauthorblockA{$^\ast$\textit{Department of Electrical and Computer Engineering, University of California San Diego}, CA, USA\\
$^\ddagger$\textit{Texas Instruments}, Dallas, TX, USA}


}

\maketitle

\begin{abstract}
\bl{This paper} addresses the problem of single snapshot Direction-of-Arrival (DOA) estimation, \bl{which is of great importance in a wide-range of applications including automotive radar.} A popular approach \bl{to achieving high angular resolution when only one temporal snapshot is available} is \bl{via subspace methods using} spatial smoothing. This involves \bl{leveraging spatial shift-invariance in the antenna array geometry---typically a uniform linear array (ULA)---to} rearrange the single snapshot measurement vector into a spatially smoothed matrix \bl{that reveals the signal subspace of interest}. However, \bl{conventional approaches using spatially shifted ULA sub-arrays} can lead to \bl{a prohibitively high} computational complexity \bl{due to the large dimensions of the resulting spatially smoothed matrix}. \bl{Hence, we} propose \bl{to instead employ judiciously designed} sparse \bl{sub-arrays, such as nested arrays,} to reduce \bl{the} computational complexity of spatial smoothing while retaining the aperture and identifiability \bl{of conventional ULA-based approaches}. \bl{Interestingly, this idea also suggests a} novel beamforming \bl{method which linearly combines multiple spatially smoothed matrices corresponding to different sets of shifts of the sparse (nested) sub-array.} \bl{This so-called \emph{shift-domain beamforming} method is demonstrated} to boost the effective SNR, \bl{and thereby resolution, in a desired angular} region of interest, \bl{enabling single snapshot} low-complexity DOA estimation with identifiability guarantees.
\end{abstract}

\begin{IEEEkeywords}
\bl{Single-Snapshot DOA} Estimation, Sparse Arrays, Spatial Smoothing, Shift-Domain Beamforming.
\end{IEEEkeywords}

\section{Introduction}

Direction-of-Arrival (DOA) \bl{estimation using multi-antenna arrays} is essential in \bl{both} wireless communications \bl{and} radar. However, in \bl{many} applications such as automotive radar \cite{patole2017automotive} \bl{or} joint communication and sensing\cite{zhang2021overview}, \bl{signal} sources may be coherent \bl{due to multi-path} and the environment highly dynamic due to the mobility of \bl{the sources or radar targets}. As a result, the number of available \bl{temporal} snapshots in a given coherence interval is limited; in the worst case, only single snapshot is available \cite{patole2017automotive,sun20214d}. In the single-snapshot scenario, a common practice is \bl{employing} a uniform linear array (ULA), whose inherent shift-invariant structure can be leveraged to \bl{identify the DOAs from} the signal subspace. In particular, spatial smoothing \cite{odendaal1994two,pillai1989forward} \bl{is typically} used to \bl{reshape the single-snapshot measurement vector of the ULA into a structured matrix of appropriate rank, from which the} DOAs can then \bl{be estimated using} high-resolution subspace methods, such as MUSIC \cite{schmidt1986multiple} or ESPRIT \cite{roy1989esprit}. 

On the one hand, in some applications like automotive radar, the number of targets that need to be resolved in the angular domain of any given range-Doppler bin can be small \cite{sun2020mimo}, therefore trading off identifiability for resolution is of great interest, where achieving superior resolution at the expense of identifiability may be \bl{acceptable.} 
On the other hand, \bl{in addition to high angular resolution,} many applications \bl{also} require \emph{fast} DOA estimation. For example, current automotive radars provide a whole radar cube---range, Doppler, and angle---every $50$ ms \cite{roldan2023low}. Moreover, DOA estimation \bl{is typically} implemented in embedded hardware with \bl{minimal} computational resources. \bl{As the computational complexity of DOA estimation typically grows with the number of antennas, computational resources may become a limiting factor when deploying increasingly larger arrays---}as is the trend \bl{at, for example, millimeter-wave wavelengths}. For these reasons, low-complexity DOA estimation algorithms are needed. 

\bl{In subspace methods, the main computational burden consists of computing the singular value decomposition (SVD) of the (possibly spatially smoothed) measurement matrix. However, in conventional spatial smoothing using ULA sub-arrays, the dimensions of this matrix are typically proportional to the number of antennas. Hence, the complexity of SVD may be prohibitively large in real-time applications employing many antennas.} 
\pp{The number of antennas could be reduced by employing a sparse array geometry \cite{wang2017coarrays}, which in addition to lowering hardware costs, can enhance resolution \cite{sarangi2023superresolution} and reduce mutual coupling \cite{liu2016supernested}. However, compared to ULAs, which accommodate many choices of the sub-array used for spatial smoothing, designing sparse array geometries that are both amenable to spatial smoothing (by virtue of a shift-invariant structure) while providing identifiability guarantees is more challenging.} 
This raises \bl{the following} question: ``\bl{Can} single-snapshot \bl{high-resolution} DOA estimation be achieved \bl{at} low \bl{computational} complexity?''

\textbf{Contributions:} 
\bl{This paper answers the above question in the positive by utilizing \emph{sparse \pp{sub-}arrays} for spatial smoothing. We show that the computational complexity of single-snapshot DOA estimation using subspace methods \pp{(implemented on general-purpose solvers)} can be significantly reduced---without compromising resolution---by using sparse (rather than ULA) sub-arrays for spatial smoothing.
Specifically, given a single-snapshot measurement vector from an $N$-antenna ULA, we show that the complexity of subspace methods employing spatial smoothing can be reduced by a factor of $\sqrt{N}$ 
using judiciously designed sparse sub-array geometries, 
such as nested arrays \cite{pal2010nested} while \pp{provably identifying} $\mathcal{O}(\sqrt{N})$ signal sources/targets.} 
\bl{This ``sparse spatial smoothing $S^3$'' approach retains a sub-array aperture proportional to that of the full ULA while utilizing all $N$ measurements.} Furthermore, we \bl{demonstrate that effective SNR, and thereby resolution, can actually be improved in a desired angular region of interest via a} novel shift-domain beamforming method. \bl{Simulations} validate that sparse array-based spatial smoothing using shift-domain beamforming \bl{can offer improved resolution} at reduced \bl{computational} complexity compared to conventional spatial smoothing using ULA sub-arrays.

\textbf{Notation:} Given an array geometry $\mathbb{S}=\{d_1,d_2,\cdots,d_M\}$, matrix $\mathbf{A}_{\mathbb{S}}(\bm{\theta}) \in \mathbb{C}^{M\times K}$ denotes the array manifold matrix for sensors located at $n\lambda/2$, where $n\in \mathbb{S}$, $\bm{\theta}\in[-\pi/2,\pi/2)^K$ denote the target directions, and $[\mathbf{A}_{\mathbb{S}}(\bm{\theta})]_{m,n}=\exp(j\pi d_m \sin\theta_n)$. 
Furthermore, the aperture of $\mathbb{S}$ is denoted by $Aper(\mathbb{S})$. The largest eigenvalue of $\mathbf{B}$ is denoted by $\lambda_{max}(\mathbf{B})$ and $\mathbf{u}_1(\mathbf{B})$ denotes the eigenvector corresponding to $\lambda_{\max}(\mathbf{B})$. \bl{Moreover,} $vec(\mathbf{V})$ denotes the vectorized matrix $\mathbf{V}$ \blu{and} $w^*$ denotes the conjugate of $w\in\mathbb{C}$.

\section{Measurement Model \& Generalized Spatial Smoothing}
\subsection{Measurement Model}
Consider $K$ narrowband source signals impinging on an $N$-sensor ULA $\mathbb{S}=\{0,1,\cdots,N-1\}$ from distinct directions $\bm{\theta}=[\theta_1, \theta_2, \cdots, \theta_K]^T$. The single snapshot received signal $\mathbf{y}\in\mathbb{C}^{N\times 1}$ is of the following form: 
\begin{equation}\label{eqn:full_measurement}
\mathbf{y}=\bm{A}_{\mathbb{S}}(\bm{\theta})\bm{x} + \bm{n} ,
\end{equation}
where $\bm{x}\in\mathbb{C}^K$ is the source/target signal and $\bm{n}\in\mathbb{C}^{N}$ is a noise vector. Note that \eqref{eqn:full_measurement} is applicable to both passive and active sensing---indeed, $\mathbb{S}$ can represent either a physical or virtual array (\blu{i.e., $N$ can be the number of physical or virtual sensors}). One prominent virtual array model is the sum co-array in active sensing (e.g., co-located MIMO radar\cite{li2007mimo}). The goal is to estimate $\{\theta_k\}_{k=1}^K$ from $\mathbf{y}$.

\subsection{Generalized Spatial Smoothing}
Spatial smoothing \cite{shan1985spatial} is a widely used technique for single snapshot DOA estimation, where the inherent ``shift invariant" structure of \rd{certain array geometries---conventionally, ULAs---}is leveraged to identify the ``signal subspace'' of interest from $\mathbf{y}$. 
Specifically, 
\rd{given a so-called basic sub-array $\mathbb{S}_b=\{d_1,d_2,\cdots,d_{N_b}\}\subseteq\mathbb{S}$ and a set of $Q\in\mathbb{N}_+$ shifted copies of $\mathbb{S}_b$, each in $\mathbb{S}$, measurement vector}
$\mathbf{y}$ is partitioned into vectors $\bm{y}_1,\bm{y}_2,\cdots,\bm{y}_Q\in\mathbb{C}^{N_b}$ 
\rd{corresponding to these shifted sub-arrays $\mathbb{S}_i=\mathbb{S}_b+\delta_i\subseteq \mathbb{S}$, where} $\delta_i\in\mathbb{Z}, i=1,2,\ldots,Q$. 
\rd{These vectors} 
are then rearranged into an augmented measurement matrix:
\begin{equation}\label{eqn:rearrange}
    \mathbf{Y}=[\bm{y}_1,\bm{y}_2,\cdots,\bm{y}_Q].
\end{equation}
In absence of noise \rd{($\bm{n}=\bm{0}$)}, \rd{$\bm{y}_i$ can be written as} $\bm{y}_i=\bm{z}_i=\bm{A}_{\mathbb{S}_b}(\bm{\theta})\bm{D}_i(\bm{\theta})\bm{x}$, where $\bm{D}_i(\bm{\theta})\in\mathbb{C}^{K\times K}$ is a diagonal matrix with $[\bm{D}_i(\bm{\theta})]_{k,k}=\exp(j\pi\delta_i\sin\theta_k)$. 
The augmented matrix $\mathbf{Y}$ \rd{therefore} permits the following decomposition \cite{bu2023harnessing}:
\begin{align}
\mathbf{Y}&=\bm{A}_{\mathbb{S}_b}(\bm{\theta})[\bm{D}_1\bm{x},\bm{D}_2(\bm{\theta})\bm{x},\cdots,\bm{D}_Q(\bm{\theta})\bm{x}]\nonumber\\
&=\bm{A}_{\mathbb{S}_b}(\bm{\theta})\text{diag}(\bm{x})\bm{A}_{\Delta}(\bm{\theta})^{T}. \label{eqn:ss_op}
\end{align}
where $\Delta=\{\delta_1,\delta_2,\cdots,\delta_Q\}$ represents the set of \rd{(integer)} shifts.

In conventional ULA-based spatial smoothing, $\mathbb{S}_b$ and $\Delta$ are both ULAs with $N_b$ and $Q$ satisfying
\begin{equation}\label{eqn:N_b and Q}
    N_b+Q-1=N.
\end{equation}
It is well-known that varying $N_b$ \rd{or} $Q$ \rd{for a given $N$} allows trading off aperture (resolution) for identifiability. A typical choice is $N_b=\lceil\frac{N}{2}\rceil$, which maximizes the number of identifiable sources \cite{shan1985spatial}.

\cref{eqn:ss_op} illustrates that the shift-invariant structure of $\mathbb{S}$ enables building the rank of $\mathbf{Y}$ on which subspace methods can be applied to identify $\bm{\theta}$. In our recent work \cite{bu2023harnessing}, we investigated generalized spatial smoothing for arbitrary (sparse) $\mathbb{S}$, 
$\mathbb{S}_b$ and 
$\Delta$ showing when exact recovery is possible. 
\pp{A special case of interest is when $\mathbb{S}_b$ and $\Delta$ contain ULA segments. Indeed,}
it is well-known that \pp{the array manifold matrix of an array with 
a ULA segment of length $K$ guarantees that the corresponding manifold matrix has a Kruskal rank of at least $K$ due to containing a Vandermonde sub-array of appropriate size}. 
Therefore, when $\mathbb{S}_b$ contains a ULA segment of length $K+1$ and $\Delta$ contains a ULA segment of length $K$, it \pp{can be shown that} that MUSIC can identify $K$ distinct angles from $\mathbf{Y}$ \pp{\cite{bu2023harnessing}}. 
Moreover, \pp{note that when} $\mathbb{S}$ is a ULA, \pp{as is assumed in this paper, many choices for $\mathbb{S}_b$ and $\Delta$ beyond ULAs are possible}. \pp{Hence,} sparse array geometries \pp{with a ULA segment of appropriate length (such as nested arrays)} can be used for $\mathbb{S}_b$ in spatial smoothing \bl{to increase aperture, and thereby resolution, while guaranteeing the identifiability of a desired number of targets}.

\subsection{Problem Setting}
For applying MUSIC, the singular value decomposition (SVD) of $\mathbf{Y}$ \bl{is computed} to find the corresponding ``noise subspace" with complexity $\mathcal{O}(\min(N_b^2Q,N_bQ^2))$ \cite{golub2013matrix}. However, for conventional ULA-based spatial smoothing, $N_b$ and $Q$ are usually $\propto N$, which will lead to SVD having a complexity of $\mathcal{O}(N^3)$. In this paper, we \bl{explore spatial smoothing schemes that offer high-resolution DOA estimation at low computational complexity. This may be of great interest in applications such as automotive radar.} In particular, \bl{we ask:}
\begin{enumerate}[label=Q$_\arabic*$]
    \item Can we reduce the complexity of SVD by an alternative choice of $\mathbb{S}_b$ while ensuring: ($\romannumeral1$) identifiability of $K$ targets and ($\romannumeral2$) an aperture \bl{proportional to $N$}?\label{i:Q1}
\end{enumerate}
 \bl{Specifically, we investigate whether sparse arrays, such as nested arrays \cite{pal2010nested}, can answer this question in the positive. Indeed, judicious sparse array designs are known} to achieve an aperture \bl{on the order} of $N$ using only $\sqrt{N}$ sensors. 
 \bl{By virtue of the ULA segment in the nested array (also with on the order of $\sqrt{N}$ sensors), identifiability of $K=\mathcal{O}(\sqrt{N})$ targets via spatial smoothing MUSIC can be guaranteed when shift set $\Delta$ is a ULA with on the order of at least $K$ sensors.} 
 However, the employed \bl{sparse} sampling scheme should also use all \bl{$N$} independent measurements in \bl{\eqref{eqn:full_measurement}, as doing} otherwise would be wasteful in presence of noise. Hence, we modify \labelcref{i:Q1} as follows:
\begin{enumerate}[label=Q$_\arabic*$,resume]
\item Can we achieve the goals in \labelcref{i:Q1} while utilizing all $N$ measurements in \eqref{eqn:full_measurement}\bl{---increasing SNR \ff{(potentially over a region of interest)}---without increasing complexity?}
\end{enumerate}

In the next section, we provide a positive answer to this question by \bl{showing how to leverage sparse arrays to reduce the computational complexity of subspace methods employing spatial smoothing.} \bl{We also present a} novel \bl{shift-domain beamforming approach for spatial smoothing, where multiple} sparse spatially smoothed measurement matrices \bl{are linearly combined} to improve SNR and angular resolution \bl{in a given region of interest}.

\section{Sparse Spatial Smoothing $S^3$: Reduced computational complexity \bl{\& improved SNR}}

To illustrate \bl{the main idea of sparse spatial smoothing and shift-domain beamforming, suppose} $\mathbb{S}_b$ is a nested array \cite{pal2010nested} with $N_b=2r(N)-2=\mathcal{O}(\sqrt{N})$ sensors, where $r(N)\triangleq\lfloor\sqrt{N}\rfloor$. That is, 
\begin{equation}\label{eqn:nested_array_geom}
    \mathbb{S}_b=\{0,1,\cdots,r(N)-2\}\bigcup\{m(r(N)-1)-1\}_{m=2}^{r(N)},
\end{equation}
with $Aper(\mathbb{S}_b)=r(N)(r(N)-1)-1=N-\sqrt{N}+C_1\bl{\propto} N$ where $C_1$ is a constant. 

The total number of shifts is given by $N-Aper(\mathbb{S}_b)+1$, and we divide them into $L$ overlapping subsets with each subset of cardinality $P$ as follows:
\begin{equation}\label{eqn:sequence of Delta}
\begin{aligned}
    \Delta_1&=\{0,1,\cdots,P-1\}\\
    \Delta_2&=\{1,2,\cdots,P\}\\
    &\vdots\\
    \Delta_L&=\{L-1,L,\cdots,L+P-2\}.
\end{aligned}
\end{equation}
Note that in this case $P$ and $L$ satisfy:
\begin{equation}
    L+P=N-Aper(\mathbb{S}_b)+1,
\end{equation}
which yields a trade-off between identifiability and complexity. In this paper, we choose $P=r(N)-C_2\bl{\propto}\sqrt{N}$ where $C_2$ is a constant ($C_2\leq 2$ implies the maximum identifiability is $r(N)-2$). This leads to $L=\mathcal{O}(1)$.


The spatially smoothed measurement matrix $\mathbf{Y}_l$ ($l=1,\cdots,L$) corresponding to $\mathbb{S}_b$ and $\Delta_l$ can be represented as (when $\mathbf{n}=\mathbf{0}$):
\begin{align}
    \mathbf{Y}_l&=[\bm{z}_l,\bm{z}_{l+1},\cdots,\bm{z}_{l+P-1}]\nonumber\\
    &=\bm{A}_{\mathbb{S}_b}(\bm{\theta})\text{diag}(\bm{x})\bm{A}_{\Delta_l}(\bm{\theta})^{T}\nonumber\\
    &=\bm{A}_{\mathbb{S}_b}(\bm{\theta})\text{diag}(\bm{x})\bm{C}_l(\bm{\theta})\bm{A}_{\Delta_1}(\bm{\theta})^{T},\label{eq:Yl}
\end{align}
where the $p$-th column of $\mathbf{Y}_l$ is $\bm{z}^{(s)}_{l+p-1}$, which contains the elements of $\mathbf{y}$ corresponding to sub-array $\mathbb{S}_b+(p+l-2)$.
$\bm{C}_l(\bm{\theta})\in\mathbb{C}^{K\times K}$ is a diagonal matrix with $[\bm{C}_l(\bm{\theta})]_{k,k}=\exp(j\pi(l-1)\sin\theta_k)$. \bl{Hence, matrix} $\bm{C}_l(\bm{\theta})$ captures the relative shifts between the elements in $\Delta_l$ and $\Delta_1$.

Note that $\mathbf{Y}_l$ is a $(2r(N)-2)\times(r(N)-C_2)$ matrix. Hence $\mathcal{O}(\sqrt{N})$ targets can be identified by subspace methods applied to $\mathbf{Y}_l$ due to the ULA segment in $\mathbb{S}_b$ and $\Delta_l$.
As we will show later, the complexity of performing SVD on \bl{spatially smoothed matrix} $\mathbf{Y}_l$ \bl{(corresponding to a nested sub-array)} is $\mathcal{O}(N^{\frac{3}{2}})$, which is significantly smaller than \bl{the} complexity $\mathcal{O}(N^2)$ of performing conventional ULA-based spatial smoothing on 
a ULA \bl{sub-array} of equivalent aperture. 

For appropriate choice of $P$, we can \bl{naturally} utilize all $N$ measurements with $L=1$. \bl{However,} are there any benefits of choosing $L>1$ in the presence of noise? \bl{Next, we show that the answer is yes: we can improve the effective SNR in a given angular region of interest by taking a suitable weighted sum of $\{\mathbf{Y}_l\}_{l=1}^L$. This corresponds to a novel form of beamforming which we denote as ``shift-domain'' beamforming.}

\subsection{Shift-domain beamforming}
\bl{Given complex-valued weights $\{w_l\}_{l=1}^L\subset\mathbb{C}$,} define the weighted linear combination of all $\mathbf{Y}_l$ as $\mathbf{\Bar{Y}}:=\sum_{l=1}^L\mathbf{Y}_l\cdot w_l^*$. \bl{Hence, by \eqref{eq:Yl}}, $\mathbf{\Bar{Y}}$ can be rewritten as follows (when $\mathbf{n}=\mathbf{0}$):
\begin{align}
    \mathbf{\Bar{Y}}
    &=\sum_{i=1}^L\bm{A}_{\mathbb{S}_b}(\bm{\theta})\text{diag}(\bm{x})\bm{C}_i(\bm{\theta})\bm{A}_{\Delta_1}(\bm{\theta})^{T}\cdot w_i^*\nonumber\\
    &=\bm{A}_{\mathbb{S}_b}(\bm{\theta})\text{diag}(\bm{x})\underbrace{(\sum_{i=1}^L\bm{C}_i(\bm{\theta})\cdot w_i^*)}_{\triangleq\bm{\Bar{C}}(\bm{\theta})}\bm{A}_{\Delta_1}(\bm{\theta})^{T}.\label{eq:Y_bar}
\end{align}
Here, \bl{$\bm{\Bar{C}}(\bm{\theta})$ is a $K\times K$ diagonal matrix}
\begin{align*}
    \bm{\Bar{C}}(\bm{\theta})&=\begin{bmatrix}
        B(\theta_1) & 0 & \cdots & 0 \\
        0 & B(\theta_2) & \cdots & 0 \\
        \vdots & \vdots & \vdots & \vdots \\
        0 & 0 & \cdots & B(\theta_K)\\
    \end{bmatrix},
\end{align*}
where $B(\theta_k)=\sum_{l=1}^Lw_l^*e^{j\pi(l-1)\sin\theta_k}$ denotes the \emph{beam gain} 
in direction $\theta_k$.

\subsection{Reduction in Computational Complexity}

\bl{Next} we \pp{discuss} the \pp{computational} complexity of the proposed sparse spatial smoothing (``$S^3$") approach and conventional spatial smoothing \pp{approaches using ULA sub-arrays of various sizes. For simplicity, we consider general-purpose solvers applied to the spatially smoothed matrices $\bar{\mathbf{Y}}$ in \eqref{eq:Y_bar} and $\mathbf{Y}$ in \eqref{eqn:rearrange}. Specialized solvers utilizing the structure of said matrices to reduce complexity is left for future work}. 

\pp{Consider $S^3$ with the nested sub-array} $\mathbb{S}_b$ as in \eqref{eqn:nested_array_geom} with $N_b=\pp{\mathcal{O}(\sqrt{N})}$, $P=\pp{\mathcal{O}(\sqrt{N})}$, and $L=\mathcal{O}(1)$. 
Constructing the corresponding spatially smoothed measurement matrix $\mathbf{\Bar{Y}}\in\mathbb{C}^{N_b\times P}$ in \eqref{eq:Y_bar} requires $LN_bP$ multiplications / additions (\blu{floating point operations}), which corresponds to a complexity of $\mathcal{O}(N)$ . Since the dimensions of $\bar{\mathbf{Y}}$ are on the order of $\mathcal{O}(\sqrt{N})\times \mathcal{O}(\sqrt{N})$, \pp{the dominant computational cost of subspace methods (using general-purpose solvers) is computing the SVD of $\mathbf{\Bar{Y}}$, which incurs} complexity $\mathcal{O}(N_bP^2)=\mathcal{O}(N^{\frac{3}{2}})$.

\pp{In case of ULA-based spatial smoothing, the basic sub-array $\tilde{\mathbb{S}}_{b}$ is a $\tilde{N}_b$ sensor ULA. 
We consider three choices of $\tilde{N}_b$: the 
\begin{enumerate*}[label=(\roman*)]
    \item conventional identifiability-maximizing choice, $\tilde{N}_b=N/2+1$ (assuming $N$ is even for simplicity); \label{i:ula_N_over_2}
    \item same sub-array aperture as $S^3$, i.e., $\tilde{N}_b=Aper(\mathbb{S}_b)=\mathcal{O}(N)$ and;\label{i:ula_aper}
    \item same number of sub-array sensors as $S^3$, i.e., $\tilde{N}_b=N_b=\mathcal{O}(\sqrt{N})$.\label{i:ula_sensors}
\end{enumerate*}
By \eqref{eqn:N_b and Q}, the corresponding number of sub-array shifts $Q$ 
is \labelcref{i:ula_N_over_2} $Q=N/2$, \labelcref{i:ula_aper} $Q=N-Aper(\mathbb{S}_b)+1=\mathcal{O}(\sqrt{N})$, and \labelcref{i:ula_sensors} $Q=N-N_b+1=\mathcal{O}(N)$. The dimensions of the spatially smoothed matrix $\mathbf{Y}\in\mathbb{C}^{\tilde{N}_b\times Q}$ are then on the of order of \labelcref{i:ula_N_over_2} $\mathcal{O}(N)\times \mathcal{O}(N)$, \labelcref{i:ula_aper} $\mathcal{O}(N)\times \mathcal{O}(\sqrt{N})$, and \labelcref{i:ula_sensors} $ \mathcal{O}(\sqrt{N})\times \mathcal{O}(N)$. Hence, computing the SVD of $\mathbf{Y}$ using a general-purpose solver incurs complexity \labelcref{i:ula_N_over_2} $\mathcal{O}(N^3)$, \labelcref{i:ula_aper} $\mathcal{O}(N^2)$, and \labelcref{i:ula_sensors} $\mathcal{O}(N^2)$.
}

\pp{The above discussion shows that} the computational complexity of \pp{subspace methods employing spatial smoothing (and general-purpose solvers)} can be reduced by a factor of \bl{$\sqrt{N}$} \pp{or more using $S^3$ instead of} conventional ULA sub-array based approaches. \cref{sec:sim} will demonstrate this reduction in computational complexity via numerical simulations.

\subsection{Design of \bl{shift-domain beamforming weights} to boost SNR over \bl{angular} region of interest}
\bl{In presence of noise}, shift-domain beamforming also affects the corresponding noise matrix. In this case, each $\mathbf{Y}_l$ contains an additional spatially smoothed noise matrix $\mathbf{N}_l\in\mathbb{C}^{N_b\times P}$ (similarly, $p$-th column of $\mathbf{N}_l$ contains the elements of $\bm{n}$ corresponding to sub-array $\mathbb{S}_b+(p+l-2)$). Therefore we have:
\begin{equation}
    \begin{aligned}
        \mathbf{\Bar{Y}}&=\bm{A}_{\mathbb{S}_b}(\bm{\theta})\text{diag}(\bm{x})\bm{\Bar{C}}(\bm{\theta})\bm{A}_{\Delta_1}(\bm{\theta})^{T}+\sum_{l=1}^L\mathbf{N}_l\cdot w_l^*\\
        &=\bm{A}_{\mathbb{S}_b}(\bm{\theta})\text{diag}(\bm{x})\bm{\Bar{C}}(\bm{\theta})\bm{A}_{\Delta_1}(\bm{\theta})^{T}+\mathbf{\Bar{N}}(\mathbf{w})
    \end{aligned}
\end{equation}
where $\mathbf{w}=[w_1,w_2,\cdots,w_L]^T\in\mathbb{C}^{L\times 1}$ the shift-domain beamforming weight vector. \bl{The question then naturally emerges how to select shift-domain beamforming weights $\{w_l\}_{l=1}^L$?} In \bl{the following, we optimize these weights to maximize SNR over an angular region of interest.} Denote $\mathbf{a}(\theta)=[1,e^{j\pi\sin\theta},\cdots,e^{j\pi(L-1)\sin\theta}]^T$, such that $B(\theta)=\mathbf{w}^H\mathbf{a}(\theta)$. The effective SNR for a single target with \bl{(deterministic) amplitude $x$ and DOA} $\theta$ is then defined as follows:
\begin{equation}
    \text{SNR}(\theta)=\frac{\lVert x\mathbf{a}_{\mathbb{S}_b}(\theta)\mathbf{w}^H\mathbf{a}(\theta)\mathbf{a}_{\Delta_1}^T(\theta)\rVert_F^2}{\mathbb{E}(\lVert\mathbf{\Bar{N}}(\mathbf{w})\rVert_F^2)}.
\end{equation}
Suppose we have prior knowledge of an \bl{angular} region of interest $\mathbf{\Theta}=(\theta_l,\theta_h)$ containing \bl{the unknown DOAs} $\{\theta_k\}_{k=1}^K$ ($\theta_{l}<\theta_k<\theta_{h}\forall k$). 
This leads to an interesting question: in presence of noise, given a field of view $(\theta_l,\theta_h)$, how to design $\mathbf{w}$ so that the \bl{average} SNR over $(\theta_l,\theta_h)$ is maximized? For simplicity, consider the reduced angle $u=\pi\sin\theta$ and the corresponding region of interest $\mathbf{U}=(u_l,u_h)=(\pi\sin\theta_l,\pi\sin\theta_h)$. The optimization problem can be formulated as below:
\begin{equation}\label{eqn:opt}
    \mathbf{w}_o=\mathop{\arg\max}_{\mathbf{w}\in\mathbb{C}^{L}}\int_{\mathbf{U}}\text{SNR}(u) \,du.
\end{equation}
As \cref{prop:1} will show, \bl{\eqref{eqn:opt}} leads to a generalized eigenvalue problem. \bl{Surprisingly, this} reduces to a Rayleigh quotient maximization problem \bl{since} the shift between $\Delta_{l_1}$ and $\Delta_{l_2}$ ($\l_1\neq l_2$) renders the corresponding noise vectors independent.
\begin{prop}\label{prop:1}
    Suppose $\bm{n}=[n_1,n_2,\cdots,n_N]^T$ where all entries are independent and identically distributed with zero mean and variance $\sigma^2$.
    The solution to \eqref{eqn:opt} is \bl{the largest eigenvector of matrix} $\mathbf{A}\in\mathbb{C}^{L\times L}$, $\mathbf{w}_o=\mathbf{u}_1(\mathbf{A})$, where
    \begin{equation*}
        [\mathbf{A}]_{m,m'}=
        \begin{cases}
            \frac{1}{(m-m')j}(e^{j(m-m')u_h}-e^{j(m-m')u_l} & m\neq m' \\
            u_{h}-u_{l} & m=m'
        \end{cases}.
    \end{equation*}
\end{prop}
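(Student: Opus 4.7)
The plan is to reduce the integral optimization in \eqref{eqn:opt} to a Rayleigh quotient maximization by separately simplifying the signal and noise powers in $\text{SNR}(u)$, then explicitly evaluating the resulting quadratic form in $\mathbf{w}$.

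For the numerator, I would note that $\mathbf{a}_{\mathbb{S}_b}(u)\mathbf{a}_{\Delta_1}^T(u)$ is a rank-one outer product of unit-modulus vectors, whose squared Frobenius norm equals $N_b P$, so after pulling out the scalar $\mathbf{w}^H\mathbf{a}(u)$ the numerator becomes $|x|^2 N_b P\,|\mathbf{w}^H\mathbf{a}(u)|^2$. For the denominator, I would expand $\bar{\mathbf{N}}(\mathbf{w}) = \sum_{l=1}^L w_l^*\mathbf{N}_l$ entrywise: using the shift structure of $\Delta_l$, the $(i,p)$-th entry is a linear combination $\sum_{l=1}^L w_l^*\, n_{d_i+p+l-1}$ of noise samples. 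The key observation---highlighted in the text---is that for each fixed $(i,p)$, the $L$ noise indices $d_i+p+l-1$ are distinct, so under the i.i.d.\ zero-mean assumption on $\bm{n}$ the corresponding noise samples are mutually independent. This causes all cross terms in $\mathbb{E}|\cdot|^2$ to vanish, leaving $\sigma^2\|\mathbf{w}\|_2^2$ per entry; summing over all $N_b P$ entries yields $\mathbb{E}\|\bar{\mathbf{N}}(\mathbf{w})\|_F^2 = \sigma^2 N_b P\|\mathbf{w}\|_2^2$.

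Substituting these expressions back, the $N_b P$ factors cancel and the cost becomes $\int_{\mathbf{U}}\text{SNR}(u)\,du = (|x|^2/\sigma^2)\,\mathbf{w}^H\bigl(\int_{\mathbf{U}}\mathbf{a}(u)\mathbf{a}(u)^H\,du\bigr)\mathbf{w}/\|\mathbf{w}\|_2^2$. Entrywise integration of $\mathbf{a}(u)\mathbf{a}(u)^H$ over $(u_l,u_h)$ will reproduce exactly the matrix $\mathbf{A}$ in the statement: the diagonal integrates $1$ to give $u_h-u_l$, while the off-diagonal entry is $\int_{u_l}^{u_h} e^{j(m-m')u}\,du$, matching the given closed form. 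The problem then reduces to maximizing the Rayleigh quotient $\mathbf{w}^H\mathbf{A}\mathbf{w}/\mathbf{w}^H\mathbf{w}$ over $\mathbf{w}\in\mathbb{C}^L$, whose maximizer is the principal eigenvector $\mathbf{u}_1(\mathbf{A})$.

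The main obstacle I anticipate is the noise-power calculation: because the physical noise samples $n_k$ are shared across the different $\mathbf{N}_l$, one might naively expect cross-covariance terms to appear, which would couple the numerator and denominator and break the Rayleigh-quotient structure. Resolving this requires the careful entrywise bookkeeping above to confirm that for any fixed $(i,p)$ the $L$ relevant noise indices are pairwise distinct. Once this independence is secured, the remaining steps---the outer-product Frobenius norm identity, the closed-form integration of the complex exponential, and invoking the standard Rayleigh quotient maximizer---are routine.
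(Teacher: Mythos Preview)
Your proposal is correct and follows essentially the same route as the paper: simplify the numerator via the constant-modulus outer product, reduce the noise power to $\sigma^2 N_bP\|\mathbf{w}\|_2^2$ by exploiting that the shift structure of $\{\Delta_l\}$ makes the relevant noise indices distinct (hence independent), and then integrate $\mathbf{a}(u)\mathbf{a}(u)^H$ to obtain the Rayleigh quotient in $\mathbf{A}$. The only cosmetic difference is that the paper organizes the noise calculation via vectorization---showing $\mathbb{E}(\mathbf{V}^H\mathbf{V})=N_bP\sigma^2\mathbf{I}$ by proving $\mathbb{E}(\mathbf{v}_i^H\mathbf{v}_{i'})=0$ for $i\neq i'$---whereas you compute $\mathbb{E}\lvert[\bar{\mathbf{N}}]_{i,p}\rvert^2$ entrywise; both hinge on the identical index-distinctness observation.
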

\begin{proof}
    Note that $\mathbf{w}^H\mathbf{a}(u)$ is a scalar, \rd{$\|\mathbf{a}_{\mathbb{S}_b}(u)\mathbf{a}_{\Delta_1}^T(u)\|_F^2$ is a constant due to all entries in vectors} 
    $\mathbf{a}_{\mathbb{S}_b}(u)$ and $\mathbf{a}_{\Delta_1}(u)$ 
    \rd{having constant-modulus}, and $x$ is deterministic. Hence, $\max_{\mathbf{w}\in\mathbb{C}^L}\text{SNR}(u)=\max_{\mathbf{w}\in\mathbb{C}^L}\frac{\lvert\mathbf{w}^H\mathbf{a}(u)\rvert^2}{\mathbb{E}(\lVert\mathbf{\Bar{N}}(\mathbf{w})\rVert_F^2)}$. Then we can write 
    \begin{equation}\label{eqn:opt2}
        \begin{aligned}
            \mathbf{w}_o&=\mathop{\arg\max}_{\mathbf{w}\in\mathbb{C}^{L}}\int_{\mathbf{U}}\frac{\lvert\mathbf{w}^H\mathbf{a}(u)\rvert^2}{\mathbb{E}(\lVert\mathbf{\Bar{N}}(\mathbf{w})\rVert_F^2)} \,du\\
            &=\mathop{\arg\max}_{\mathbf{w}\in\mathbb{C}^{L}}\frac{\mathbf{w}^H(\int_{\mathbf{U}}\mathbf{a}(u)\mathbf{a}^H(u)\,du)\mathbf{w}}{\mathbb{E}(\lVert\mathbf{\Bar{N}}(\mathbf{w})\rVert_F^2)}.
        \end{aligned}
    \end{equation}
    Let $\mathbf{A}(u)=\mathbf{a}(u)\mathbf{a}^H(u)$ and $\mathbf{A}=\int_{\mathbf{U}}\mathbf{a}(u)\mathbf{a}^H(u)\,du$. \bl{Then}
    \begin{equation*}
        \begin{aligned}
            [\mathbf{A}]_{m,m'}&=\int_{\mathbf{U}}[\mathbf{A}(u)]_{m,m'}\,du\\
            &=\begin{cases}
                \int_{u_l}^{u_h}e^{j(m-m')u}\,du & m\neq m' \\
                \int_{u_l}^{u_h}1\,du & m = m'
            \end{cases}\\
            &=\begin{cases}
                \frac{1}{(m-m')j}(e^{j(m-m')\pi\sin\theta_{h}}-e^{j(m-m')\pi\sin\theta_{l}}) & m\neq m' \\
                \pi\sin\theta_{h}-\pi\sin\theta_{l} & m=m'.
            \end{cases}
        \end{aligned}
    \end{equation*}
    Let $\mathbf{v}_i=vec(\mathbf{N}_i)$, denote $\mathbf{V}=[\mathbf{v}_1,\mathbf{v}_2,\cdots,\mathbf{v}_L]$, then we have:
    \begin{equation*}
        \begin{aligned}
            \mathbb{E}(\lVert\mathbf{\Bar{N}}(\mathbf{w})\rVert_F^2)
            &=\mathbb{E}(\lVert\sum_{i=1}^L\mathbf{N}_i^{(s)}\cdot w_i^*\rVert_F^2)\\
            &=\mathbb{E}(\lVert\mathbf{V}\mathbf{w}^*\rVert_2^2)\\
            &=\mathbf{w}^T\underbrace{\mathbb{E}(\mathbf{V}^H\mathbf{V})}_{\bl{\triangleq \mathbf{P}}}\mathbf{w}^*.
        \end{aligned}
    \end{equation*}
    We now show that $\mathbf{P}$ is a scaled identity matrix due to the shift between $\Delta_i$ and $\Delta_{i'}$ ($i\neq i'$).
    Define $N_s\triangleq N_bP$ and let $\mathbf{v}_1=[n_{q_1},n_{q_2},\cdots,n_{q_{N_s}}]^T\in\mathbb{C}^{N_s}$. Due to the shift between different $\Delta_l$ in \cref{eqn:sequence of Delta} we have:
    \begin{align*}
        &\mathbf{v}_i=[n_{q_1+(i-1)},n_{q_2+(i-1)},\cdots,n_{q_{N_s}+(i-1)}]^T\\
        &\mathbf{v}_{i'}=[n_{q_1+(i'-1)},n_{q_2+(i'-1)},\cdots,n_{q_{N_s}+(i'-1)}]^T\\
        &\mathbb{E}(\mathbf{v}_i^H\mathbf{v}_{i'})=\sum_{r=1}^{N_s}\mathbb{E}(n_{q_r+(i-1)}^*n_{q_r+(i'-1)}).
    \end{align*}
    Note that both $q_r+(i-1), q_r+(i'-1)\in\{1,2,\cdots,N\}$, if $i\neq i'$, then $q_r+(i-1)\neq q_r+(i'-1)$ and therefore $n_{q_r+(i-1)}$ is independent of $n_{q_r+(j-1)}$ $\forall r\in\{1,2,\cdots,N_s\}$. Thus, $\mathbb{E}(\mathbf{v}_i^H\mathbf{v}_{i'})=0$ for $i\neq i'$. 
    
    Using $\mathbb{E}(\mathbf{v}_i^H\mathbf{v}_j)=0$ for $i\neq j$ and $\mathbb{E}(\mathbf{v}_i^H\mathbf{v}_i)=N_s\sigma^2$, we have $\mathbf{P}=N_s\sigma^2\cdot\mathbf{I}$. Then \cref{eqn:opt2} becomes:
    \begin{equation}
        \begin{aligned}
            \mathbf{w}_o&=\mathop{\arg\max}_{\mathbf{w}\in\mathbb{C}^{L}}\frac{\mathbf{w}^H\mathbf{A}\mathbf{w}}{\mathbf{w}^T\mathbf{P}\mathbf{w}^*}\\
            &=\mathop{\arg\max}_{\mathbf{w}\in\mathbb{C}^{L}}\frac{\mathbf{w}^H\mathbf{A}\mathbf{w}}{\mathbf{w}^H\mathbf{w}}\\
            &=\mathop{\arg\max}_{\mathbf{w}\in\mathbb{C}^{L}}R(\mathbf{A},\mathbf{w}),
        \end{aligned}
    \end{equation}
    where $R(\mathbf{A},\mathbf{w})$ is the Rayleigh quotient \cite{horn2012matrix} for $\mathbf{A}$ and $\mathbf{w}$. \bl{Since} $R(\mathbf{A},\mathbf{w})\leq\lambda_{max}(\mathbf{A})$, equality holds if and only if $\mathbf{w}=\mathbf{u}_1(\mathbf{A})$.
\end{proof}
\cref{fig.bp} \rd{shows an example beampattern corresponding to the optimal shift-domain weights in \eqref{eqn:opt}} for $\mathbf{\Theta}=(10^{\circ},40^{\circ})$ and $L=11$. Note that $\mathbf{w}_o$ can be computed offline once $\mathbf{\Theta}$ is given.
\begin{figure}[htbp]
\centerline{\includegraphics[width=0.45\textwidth]{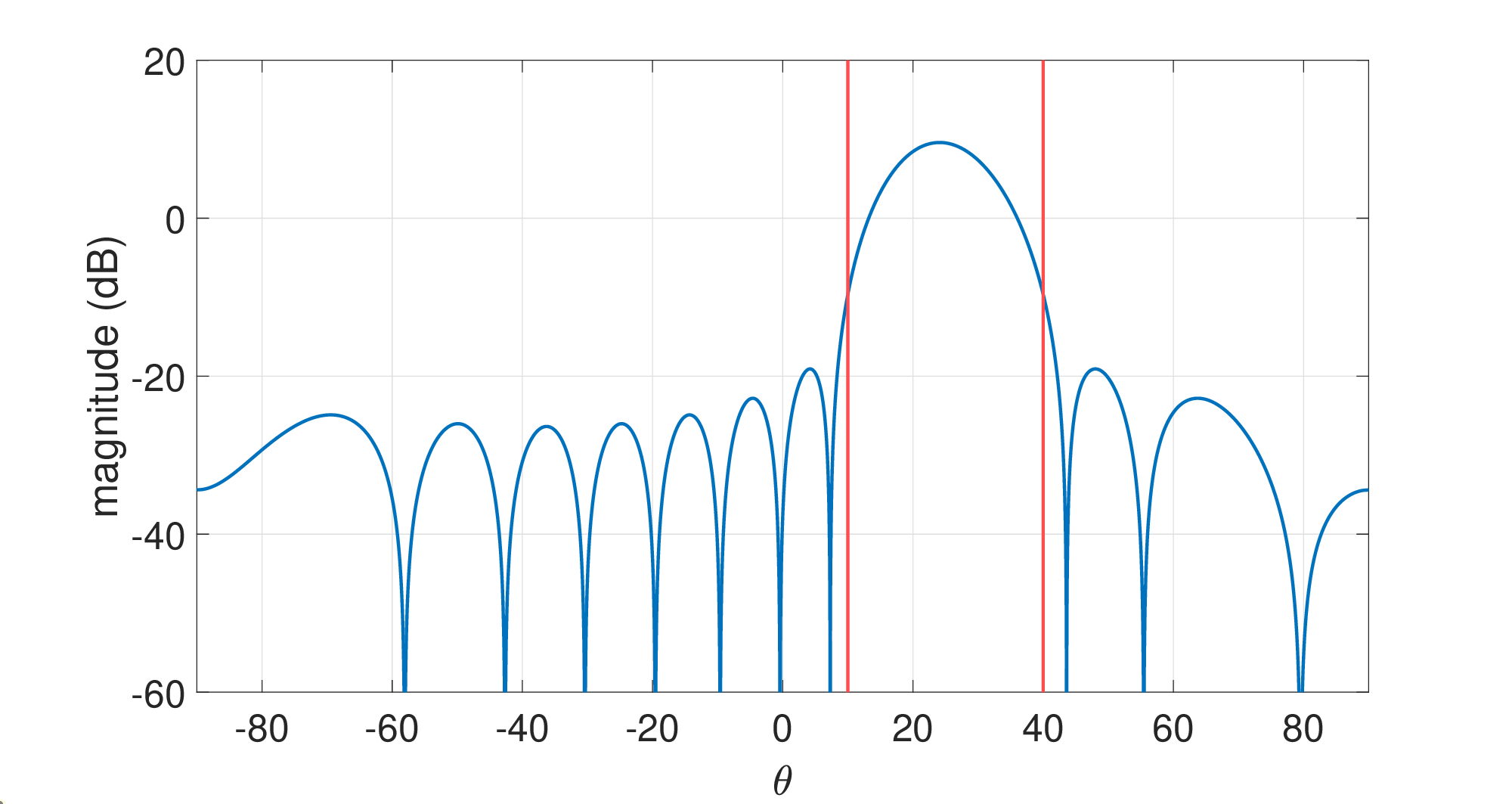}}
\caption{Beampattern of optimal shift-domain weights in \eqref{eqn:opt}. The region of interest in the mainlobe \rd{enjoys a beamforming gain} (red lines indicate $\theta_l,\theta_h$).}
\label{fig.bp}
\end{figure}

\section{Simulation Results}\label{sec:sim}
In our first simulation, we consider $K=3$ sources and $N=89$ sensors. 
The true DOAs are $\bm{\theta}=[20^{\circ}, 25^{\circ},30^{\circ}]$, the source/target signals are $\bm{x}=\frac{1+j}{\sqrt{2}}[1,1,1]^T$, and the noise follows a circularly-symmetric complex normal distribution $\bm{n}\sim\mathcal{CN}(0,\sigma^2\mathbf{I}_{N\times N})$. We vary \bl{noise level} $\sigma$ from 0.1 to 10, which corresponds to an SNR $\triangleq 20\cdot\log_{10}\frac{\min_k(\lvert x_k\rvert)}{\sigma}$ ranging from $20$ dB to $-20$ dB. 
We compute the empirical RMSE per source (in degrees) over 200 Monte Carlo trials \bl{as} $\text{RMSE}=\sqrt{\frac{1}{200K}\sum_{i=1}^{200}\lVert\bm{\hat{\theta}_i}-\bm{\theta}\rVert_2^2}$ where $\bm{\hat{\theta}_i}=[\hat{\theta}_1^{(i)},\hat{\theta}_2^{(i)},\hat{\theta}_3^{(i)}]^T$ is the DOA estimate of the $i$-th \rd{trial}. Spectral MUSIC \cite{schmidt1986multiple} with a grid size of 20000 is used for DOA estimation.

We compare four \rd{spatial smoothing} approaches \rd{with the following parameter values:}
\begin{align*}
    (i)\enspace&\mathbb{S}_{b}=\{0,1,\cdots,7,15,23,31,39,47,55,63,71\},\\
    &\Delta_1=\{0,1,\cdots,7\}, L=11;\\
    (ii)\enspace&\mathbb{S}_{b}=\{0,1,\cdots,8,17,26,35,44,53,62,71,80\},\\
    &\Delta_1=\{0,1,\cdots,8\}, L=1;\\
    (iii)\enspace&\mathbb{S}_{b}=\{0,1,\cdots,80\},\Delta_1=\{0,1,\cdots,8\}, L=1;\\
    (iv)\enspace&\mathbb{S}_{b}=\{0,1,\cdots,15\},\Delta_1=\{0,1,\cdots,73\}, L=1.
\end{align*}
\rd{These cases correspond to} ($i$) $S^3$ with shift-domain beamforming (\rd{nested basic sub-array with} $N_b=16$ sensors, $P=8$ shifts of the basic sub-array per spatially smoothed measurement matrix and $L=11$ shift-domain beamforming weights chosen according to \cref{prop:1} with $\mathbf{\Theta}=(10^{\circ},40^{\circ})$---see \cref{fig.bp}); ($ii$) $S^3$ without shift-domain beamforming (\rd{nested basic sub-array with $N_b=17$, $L=1$} using all $N$ measurements); ($iii$) ULA1 
\rd{with} the same \rd{basic sub-array} aperture as the nested array in ($ii$); and ($iv$) ULA2 \rd{with} the same number of \rd{basic sub-array} sensors as the nested array in \rd{($i$)}.

\begin{figure}[htbp]
\centerline{\includegraphics[width=0.5\textwidth]{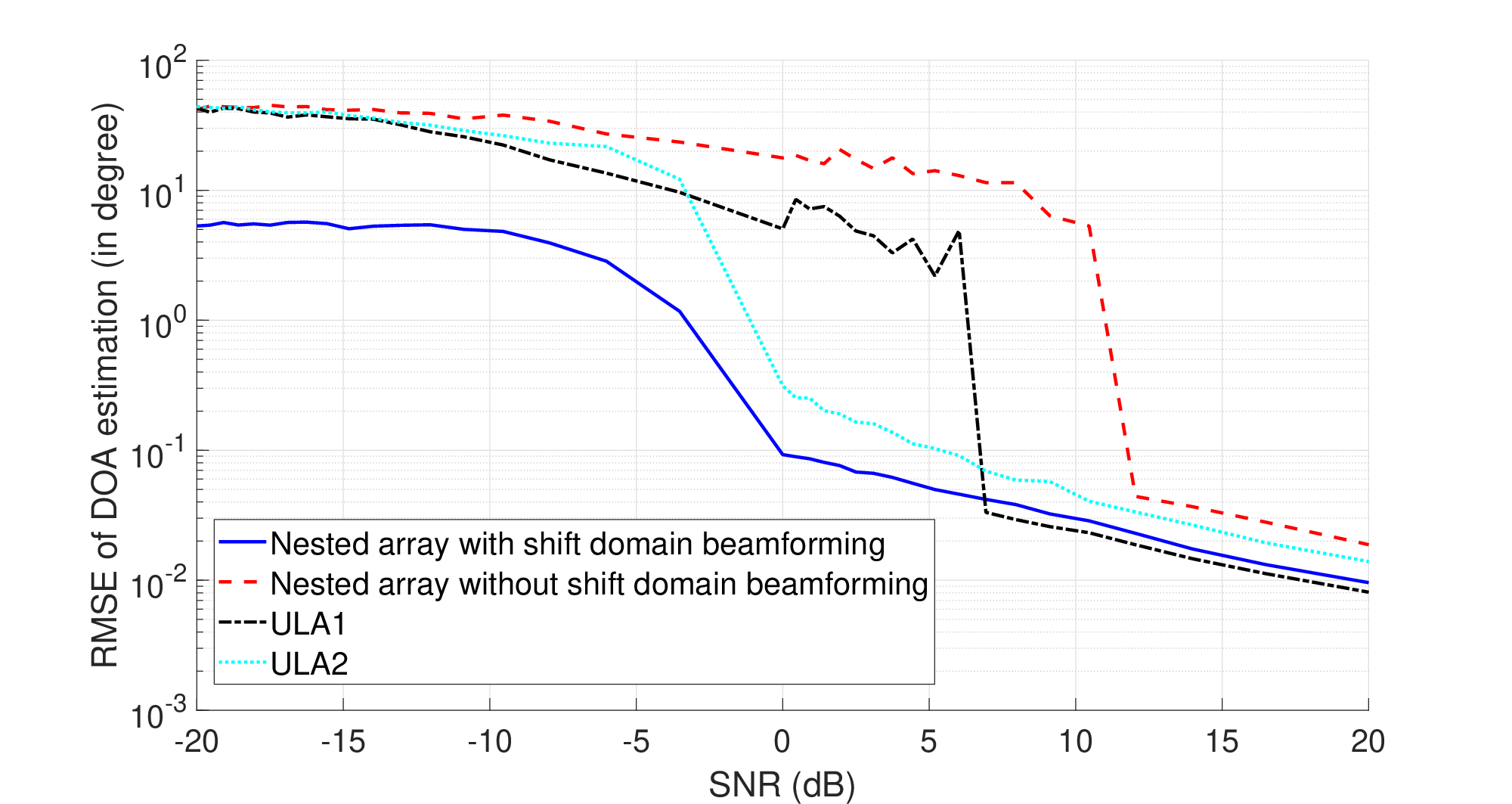}}\vspace{-0.2cm}
\caption{Average DOA estimation error versus SNR. \bl{Sparse spatial smoothing employing shift-domain beamforming \ff{(over region $\mathbf{\Theta}=(10^{\circ},40^{\circ})$)} achieves lower error than conventional} ULA \bl{sub-array-based} approaches at low SNR.}
\label{fig.1}
\end{figure}
\cref{fig.1} shows the DOA estimation error as a function of SNR. The performance of sparse spatial smoothing $S^3$ with \bl{shift-domain} beamforming outperforms conventional ULA-based spatial smoothing when SNR is low, while retaining comparable performance when SNR is high. 
\bl{\cref{fig.1} also illustrates the fact that resolution is not merely a function of aperture, but also the effective SNR \cite{sarangi2023superresolution} as the improved beamforming gain of $S^3$ with shift-domain beamforming compensates for its slightly smaller aperture compared to $S^3$ without shift-domain beamforming.} 
\cref{fig.2} shows the MUSIC pseudo-spectrum of both $S^3$ with shift-domain beamforming (top panel) and the two conventional ULA-based spatial smoothing approaches ULA1 (middle panel) and ULA2 (bottom panel) in the case of close DOA separation ($\bm{\theta}=[20^{\circ}, 22^{\circ}, 24^{\circ}]$). \ff{The left column of \cref{fig.2} shows that when DOA separation is close, ULA2 cannot resolve the DOAs due to its limited aperture.} Moreover, when SNR is low (right column of \cref{fig.2}), the \bl{beamforming gain} of $S^3$ due to shift-domain beamforming enables resolving all three sources, when both ULA1 (with slightly larger \bl{basic sub-array} aperture) and ULA2 (with the same number of basic sub-array sensors) fail \ff{due to a spurious peak in the MUSIC pseudo-spectrum (likely due to noise shaping by spatial smoothing) and decreased resolution, respectively}. Note that $S^3$ offers this improved performance at a reduced computational complexity, as we demonstrate next.
\begin{figure}[htbp]
\includegraphics[width=0.241\textwidth]{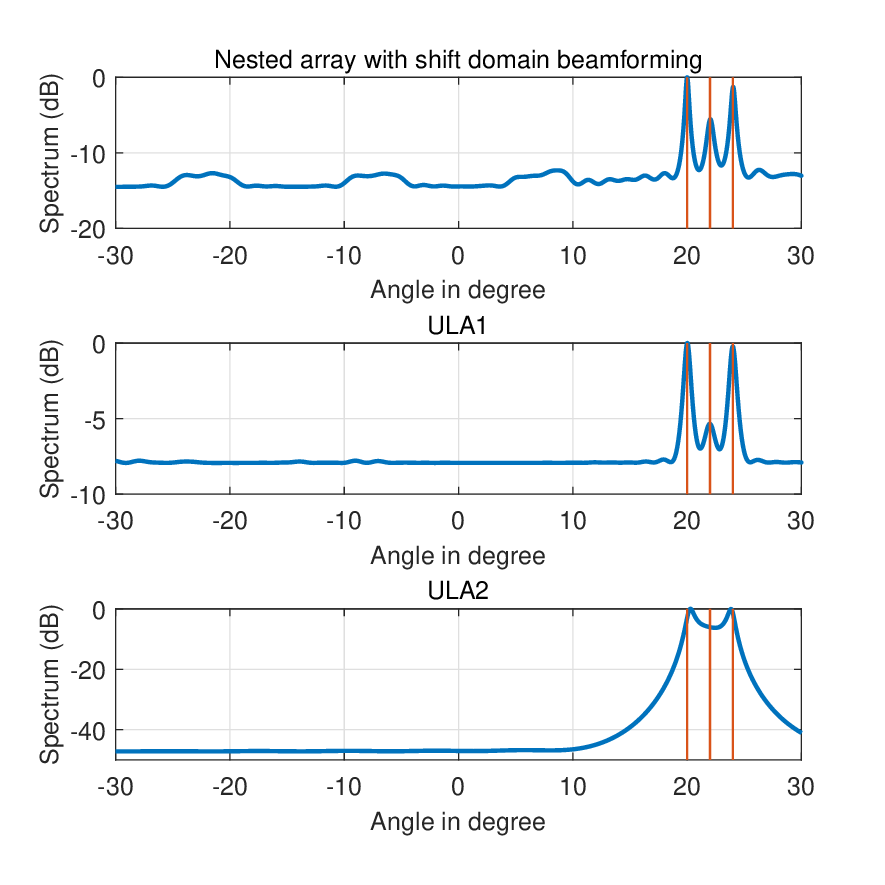}
\includegraphics[width=0.241\textwidth]{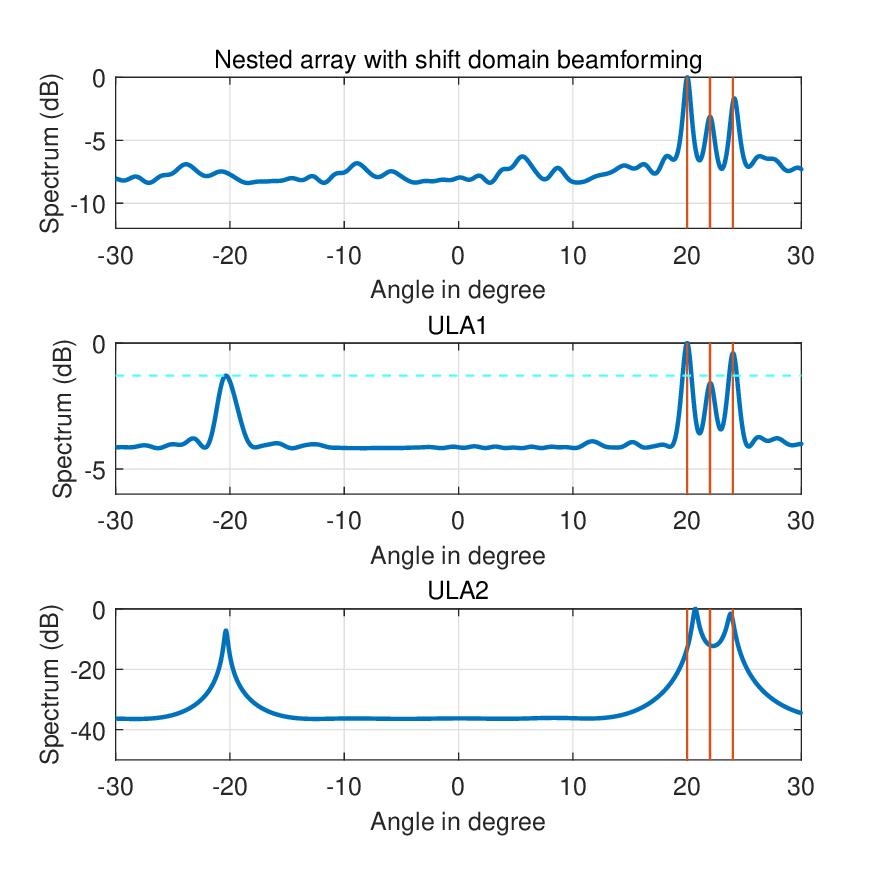}
\caption{MUSIC pseudo-spectrum with (left) SNR = 14 dB and (right) SNR = 0 dB. $S^3$ with shift-domain beamforming (top) provides improved resolution compared to conventional ULA-based spatial smoothing (middle and bottom), especially at low SNR.}
\label{fig.2}
\end{figure}

In our second simulation, we set $N_b=2P$ for the nested array 
$N=Aper(\mathbb{S}_b)+L+P-1$ and vary $P$ from 5 to 15 \ff{(in the previous simulation, $P\!=\!8$)}. 
Other parameters remain unchanged. We compare four approaches with the following spatial smoothing parameters:
\begin{align*}
    (i)\enspace&\mathbb{S}_{b}=\{0,1\cdots,P-1, 2P-1,\rd{3P-1},\cdots, \rd{(P+1)P}-1\},\\
    &\Delta_1=\{0,1,\cdots,P-1\}, L=11;\\
    (ii)\enspace&\mathbb{S}_{b}=\{0,1,\cdots,P-1, 2P-1,\rd{3P-1},\cdots, \rd{(P+1)P}-1\},\\
    &\Delta_1=\{0,1,\cdots,P+9\}, L=1;\\
    (iii)\enspace&\mathbb{S}_{b}=\{0,1,\cdots,\rd{(P+1)P}-1\},\Delta_1=\{0,1,\cdots,P+9\}, L=1;\\
    (iv)\enspace&\mathbb{S}_{b}=\{0,1,\cdots,2P-1\},\Delta_1=\{0,1,\cdots,P^2+9\}, L=1.
\end{align*}
\rd{These cases correspond to} ($i$) $S^3$ with shift-domain beamforming; ($ii$) $S^3$ without shift-domain beamforming
; ($iii$) ULA1 \rd{with} the same \rd{basic sub-array} aperture as the nested array in ($i$); and ($iv$) ULA2 \rd{with} the same number of \rd{basic sub-array} sensors as the nested array in ($i$).

\cref{fig.3} shows the running time in all \bl{four} cases---including constructing the spatially smoothed measurement matrix and applying SVD---averaged over 200 Monte Carlo trials. Clearly, $S^3$ (both with \bl{and without shift-domain} beamforming) achieves much lower computational complexity than conventional ULA-based spatial smoothing. The size of the corresponding spatially smoothed measurement matrices are \rd{($i$)} $2P\times P$; \rd{($ii$)} $2P\times(P+10)$; \rd{($iii$)} $(P^2+P)\times(P+10)$; and \rd{($iv$)} $2P\times(P^2+10)$. \bl{Note that in absence of prior knowledge of the DOAs, we can still employ $S^3$} without shift-domain beamforming to achieve \bl{high-resolution DOA estimation over the whole angular region at} reduced computational complexity.
\begin{figure}[htbp]
\centerline{\includegraphics[width=0.45\textwidth]{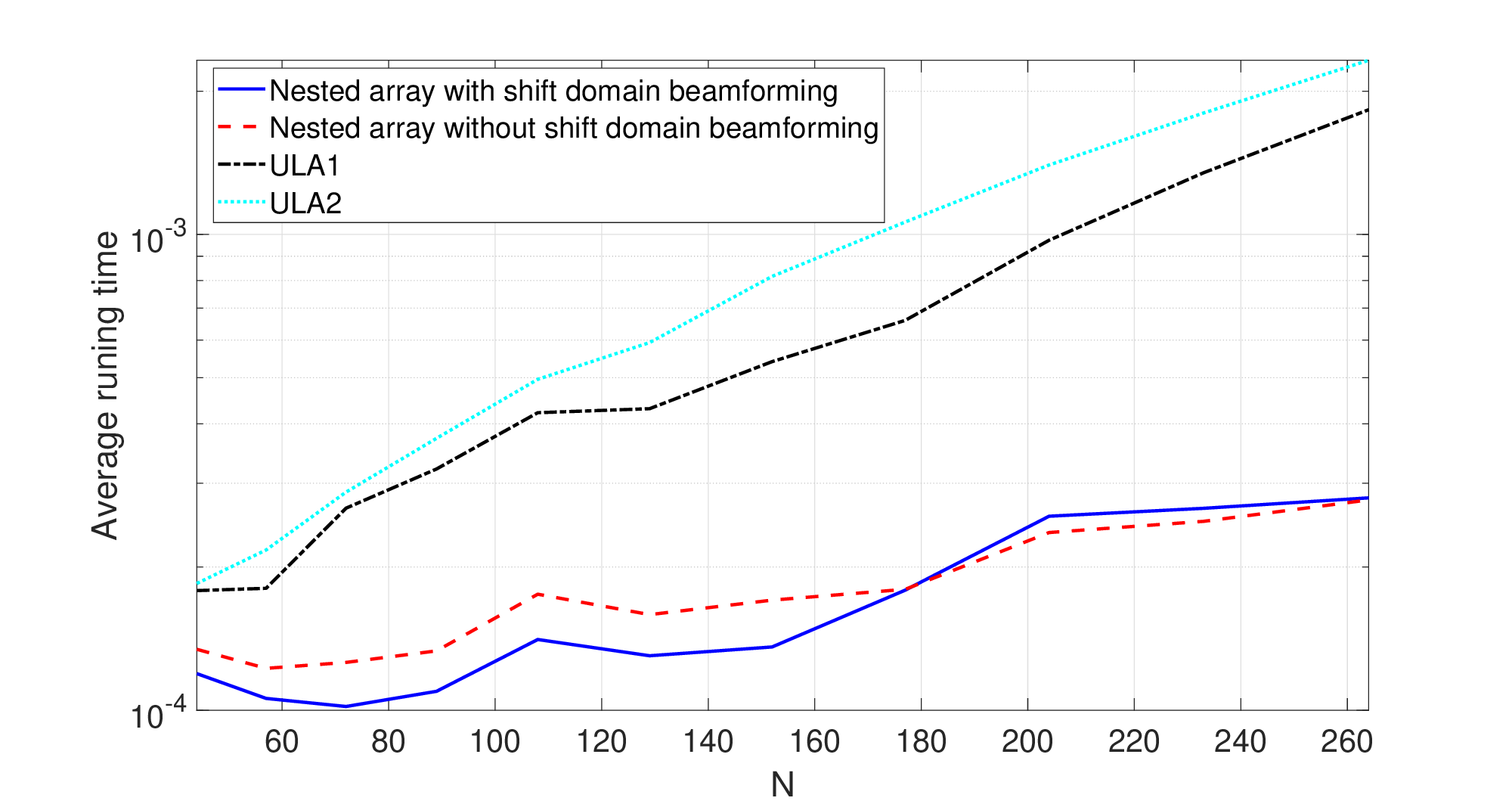}}
\caption{Average running (elapsed real) time versus number of physical sensors $N$. \bl{Sparse spatial smoothing} 
achieves \bl{significantly} lower computational complexity than ULA-based approaches.}
\label{fig.3}
\end{figure}

\section{Conclusions}
This paper \bl{considered} the problem of single-snapshot high-resolution DOA estimation. \bl{Specifically, we showed that the computational complexity of subspace methods employing spatial smoothing on an $N$-antenna ULA can be reduced by a factor of $\sqrt{N}$ by utilizing suitable sparse (instead of ULA) sub-arrays in the spatial smoothing step. This reduction in} complexity \bl{is achieved} while retaining the aperture and identifiability of conventional spatial smoothing \bl{using} ULA sub-arrays \bl{when the number of targets is $\mathcal{O}(\sqrt{N})$}. We also presented a novel beamforming \bl{approach to} sparse array-based spatial smoothing, \bl{where multiple spatially smoothed matrices are linearly combined} to improve the effective SNR in a desired angular region. Simulations \bl{demonstrated} that this \bl{so-called shift-domain beamforming method can improve resolution by appropriate beamforming weight design while} 
 achieving significantly lower computational complexity than conventional ULA-based spatial smoothing. Future work will explore \bl{noise-}robust shift-domain beamforming weight designs \bl{and sub-array choices for sparse spatial smoothing}.
\bibliographystyle{IEEEtran}
\bibliography{ref}

\end{document}